\documentclass[12pt, a4paper]{article}
\usepackage{graphicx} 
\usepackage{parskip} 
\usepackage{braket} 
\usepackage{slashed}
\usepackage{wrapfig}
\usepackage[backend=biber,style=nature,doi=true,url=true]{biblatex}
\addbibresource{main.bib}
\usepackage{xurl}
\usepackage[hidelinks]{hyperref}

\usepackage{setspace}
\setstretch{1.15}
\usepackage[margin=1.15in]{geometry}

\usepackage{amsmath,amssymb,amsthm,mathtools}
\usepackage{enumitem}
\usepackage{stmaryrd} 
\usepackage{microtype}

\newtheorem{theorem}{Theorem}
\newtheorem{proposition}{Proposition}
\newtheorem{lemma}{Lemma}
\newtheorem{corollary}{Corollary}
\theoremstyle{definition}
\newtheorem{definition}{Definition}
\theoremstyle{remark}

\theoremstyle{definition}
\newtheorem{assumption}{Assumption}

\newcommand{\Pow}{\mathcal P}
\newcommand{\True}{\mathsf{True}}

\newcommand{\F}{\mathcal F}

\title{The Logical Structure of Physical Laws: A Fixed Point Reconstruction}
\author{
    Eren Volkan Küçük\footnote{Department of Physics and Astronomy, Universität Heidelberg, Germany, eren.kucuk@stud.uni-heidelberg.de}
    }
\date{}

\begin{document}
\maketitle

\begin{abstract}
We formalise the self-referential idea of physical lawhood (“self-subsumption”) as the requirement that a theory contains exactly those candidate laws that are admissible relative to the theory itself. We show that a naive extensional formulation collapses into a Russell-style type confusion (and, if forced, a self-membership pathology). The repair is to distinguish the type of law-candidates from the type of law-packages and to encode admissibility as an operator $\F$ on a lattice of packages; self-subsumption then becomes the fixed-point equation $S=\F(S)$. Under standard assumptions (packages forming a complete lattice and $\F$ monotone), Tarski’s fixed point theorem yields a canonical least fixed point $\mu F$, interpreted as a minimal stable theory under the chosen admissibility constraints. We construct broad classes of such monotone admissibility operators from invariance principles via Galois connections, and we illustrate the schema with toy instantiations inspired by quantum electrodynamics and general relativity that capture symmetry and locality constraints. We do not propose a procedure that derives the laws of nature from first principles; rather, we provide a general fixed-point architecture for reconstructing stable theory-packages from explicit admissibility criteria.
\end{abstract}

\textbf{Keywords:} laws of nature, logic, symmetry and invariance, philosophy of physics

\section{Introduction}
\subsection*{Background and positioning}
The contemporary debate on laws of nature spans two broad families of views. On the one hand, Humean and “best system” approaches treat laws as those generalisations (or axioms) that best balance virtues such as simplicity and strength within an overall deductive system \cite{Carroll2020,Lewis1973,Lewis1983}. On the other hand, anti-Humean “governing” accounts aim to ground lawhood in features of the world over and above mere regularities, for instance by appealing to relations among universals or other truth-makers for nomological statements \cite{Dretske1977,Tooley1977,Armstrong1983}. A further line of critique stresses that many celebrated “laws” in physics function via idealisation, ceteris paribus clauses, or model-based approximations rather than strict exceptionless truths \cite{Cartwright1983}.\footnote{For a structured overview of these positions and the role of the best-system tradition in particular, see \cite{Carroll2020}. For an account emphasising the modal and counterfactual robustness of laws and their connection to scientific explanation, see \cite{Lange2009}.
}

The present work is intended to be largely orthogonal to these metaphysical commitments. Instead of proposing a new analysis of what laws “are”, we provide a general formal template for law-selection. We represent candidate theories as sets of law-candidates and define an admissibility operator that maps any provisional theory to the set of candidates admissible relative to it. This isolates, in a single mathematical object, the various constraints that are often invoked in practice, including symmetry and invariance considerations \cite{Earman2004,BradingCastellani2003}, while allowing the admissibility notion to depend on the current theory.

Technically, our construction relies on standard fixed-point machinery for monotone operators on complete lattices, where the least fixed point serves as a canonical “minimal stable” theory \cite{Tarski1955}. This perspective is closely related to the use of closure operators induced by adjoint pairs (Galois connections) and to fixed-point semantics familiar from program analysis \cite{CousotCousot1977}. It is also reminiscent, at the level of method rather than subject matter, of fixed-point constructions employed to manage semantic self-reference \cite{Kripke1975,Tarski1956}. Our focus, however,
is not the semantics of truth predicates; it is the reconstruction of a stable, constraint-closed theory from an explicit admissibility criterion.

\subsection*{From invariance to self-subsumption}
Another recurring thought in both metaphysics and physics is that lawhood is tied to a kind of stability: Laws are not merely true regularities, but statements (or structures) that remain robust under appropriate changes of description, coordinate choice, or background assumptions. In philosophy, this idea is often expressed by appealing to invariance. What is objective, and what counts as a genuine explanatory pattern, should be invariant under an admissible class of transformations \cite{Nozick2001,Nozick1981}. In physics the same theme appears concretely as symmetry and constraint principles. A mature theory is not just a list of dynamical equations; it is a package that is closed under the consequences of its constitutive requirements, such as symmetry, locality, unitarity, renormalisability, covariance, etc.

This paper studies a particularly strong way of packaging that stability idea, which is called \textit{self-subsumption}, that is, the proposal that the selected laws are exactly those candidates that satisfy an admissibility criterion whose evaluation itself depends on the totality of selected laws. Schematically, one tries to say
\[
\text{"}\psi\in P \iff \text{$\psi$ is admissible relative to $P$"}.
\]
Read naively and extensionally, this easily collapses into a Russell-style type confusion. The criterion is applied to an object of the wrong kind, and if one forces well-formedness by collapsing levels, the resulting statement degenerates into a toxic self-membership condition of the form $P\in P$ \cite{Russell1908}. Our first aim is to make this failure precise and to isolate what, in the extensional encoding, turns an intended explanatory constraint into either ill-typed syntax or a trivial self-endorsement.

Our second aim is to give a repair that preserves the self-subsumption idea while preventing the type collapse. The central move is to keep distinct the type of \emph{laws} (elements of a candidate domain) and the type of \emph{law-packages} (sets, subspaces, or other structured collections of laws). Once this distinction is enforced, the self-subsumption biconditional becomes a fixed point equation for an induced admissibility operator. In the basic setting where packages form a complete lattice and the admissibility operator is monotone, Tarski's theorem guarantees fixed points and provides canonical extremal solutions, in particular a least fixed point $\mu\F$ \cite{Tarski1955}. This brings self-reference into a familiar and mathematically controlled form, analogous in spirit (though not identical in purpose) to fixed point treatments of semantic self-reference \cite{Kripke1975}.

Our third aim is to explain how this fixed-point normal form functions as a useful template in physics. The formalism is not proposed as a machine that derives the laws of nature from first principles. Rather, it isolates a structural pattern already implicit in practice: Once a candidate domain $\Sigma$ is fixed and one specifies admissibility constraints, one obtains a completion map $\F$ whose fixed points are exactly the closed, admissible theory-packages. Importantly, such completion maps need not be inflationary. A "seed" $S$ may serve only to extract constraints that determine an admissible completion $\F(S)$, without requiring $S\subseteq \F(S)$. This feature matches common physics methodology, where a partial commitment can select a canonical completion under additional principles.

The main results can be summarised as follows:
\begin{enumerate}[label=(\roman*)]
\item We show that an extensional set-theoretic formulation of self-subsumption either becomes
ill-typed or collapses into self-membership once levels are identified.
\item We give a typed reconstruction in which self-subsumption is equivalent to a fixed point
equation $P=\F(P)$ on a lattice of packages, with canonical minimal choice $P=\mu\F$ under
monotonicity. \item We provide an invariance-based construction of monotone admissibility operators via a Galois correspondence between packages and symmetry groups, and illustrate the framework with toy instantiations for QED and GR. 
\end{enumerate}
We should also note that we do not claim uniqueness or correctness of $\Sigma$ or $\F$; those encode substantive physical choices. Our contribution is the fixed-point architecture and its invariance-based construction of admissibility operators. 

In that light, the plan of the paper is as follows: Section~2 diagnoses the failure of the extensional formulation. Section~3 develops the typed fixed point reconstruction and establishes the fixed point normal form. Section~4 explains how to instantiate the schema in physics via completion operators encoding constitutive constraints. Section~5 presents the QED and GR examples. We conclude by summarising what the fixed point viewpoint clarifies, and what it does not attempt to settle about the metaphysics of laws.

\section{Failure of Extensional Set-Theoretic Formulation}

Our first aim is to show that an extensional set-theoretical formulation\footnote{A property $C$, represented extensionally in set theory, is understood as the set of all objects that satisfy it; that is
\[
\text{ext}(C) := \{\psi \in \Lambda : C(\psi)\}.
\]}  of self-subsumption leads to inconsistencies, or is just unfruitful, in the sense of being a tautology. More concretely, it shows that a naive extensional set-theoretic encoding either (i) is ill-typed unless one collapses levels, or (ii) collapses into $P\in P$ once levels are collapsed.

Let $\Lambda$ be a set of candidate laws.\footnote{
We do not assume a metaphysically privileged, objective "set of all possible laws". Since this text is concerned with the formal structure of a physics theory, what concerns us is the existence of such a set, be it in a specified domain of discourse about nature or in a more general domain of metaphysics, rather than its actual law content. A standard way to construct such a set is by fixing a countable formal language $L$ and letting $\mathrm{Sent}(L)$ be its set of sentences. Since sentences are finite strings over a countable alphabet, $\mathrm{Sent}(L)$ is countable. Thereafter, it is ordinary to fix a Gödel coding $g:\mathrm{Sent}(L)\to\mathbb N$ and set
\[
\Lambda := g(\mathrm{Sent}(L))\subseteq \mathbb N .
\]
A law-candidate is then an element $\psi\in\Lambda$.

Otherwise, in a more directly physics-faced setting, one can devise $\Lambda$ as a set of (possible) operators and/or states of a system, which satisfy certain constraints, such as locality, unitarity, gauge equivalence, etc.  

Afterwards, the semantic meaning can be fixed by a class of intended models $\mathsf{Mod}$ and a mapping called a satisfaction relation. Each $\psi\in\Lambda$ has meaning via this satisfaction relation. Writing $\varphi=g^{-1}(\psi)$, define
\[
\llbracket \psi\rrbracket := \{ M\in \mathsf{Mod} : M\models \varphi\}\subseteq \mathsf{Mod}.
\]
Thus, candidate laws are meaningful insofar as they constrain the intended models.} Represent the characteristic $C$ extensionally by its extension
\[
C^{\mathrm{ext}}\subseteq \Lambda,
\qquad \psi\in C^{\mathrm{ext}} \iff C(\psi).
\]
Define the "selected" laws
\[
P \;:=\; \{\psi\in\Lambda : C(\psi)\}.
\]
\begin{lemma}\label{lem:collapse}
Under the extensional reading above, $P=C^{\mathrm{ext}}$.
\end{lemma}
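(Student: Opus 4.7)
The plan is to prove $P = C^{\mathrm{ext}}$ by the axiom of extensionality, showing both inclusions simultaneously through a biconditional chain that unfolds the two defining clauses. The lemma is essentially a bookkeeping statement: its real role in the paper is to make explicit that once one passes to the extensional reading, the "selected laws" set is nothing other than the extension of $C$, so that any self-referential content of the criterion $C$ is lost in favor of its set-theoretic shadow. There is no genuine obstacle; the only care needed is to track the ambient set $\Lambda$ on both sides.

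First I would fix an arbitrary $\psi$ and argue in three steps. By the set-builder definition of $P$, we have $\psi \in P$ iff $\psi \in \Lambda$ and $C(\psi)$ holds. By the assumed extensional representation, $C^{\mathrm{ext}} \subseteq \Lambda$ and the equivalence $\psi \in C^{\mathrm{ext}} \iff C(\psi)$ holds (with the implicit restriction $\psi \in \Lambda$, since $C^{\mathrm{ext}} \subseteq \Lambda$). Concatenating the two biconditionals gives $\psi \in P \iff \psi \in C^{\mathrm{ext}}$ for every $\psi \in \Lambda$, and appealing to extensionality concludes $P = C^{\mathrm{ext}}$.

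The only point worth flagging is the side-condition $\psi \in \Lambda$, which is needed to match the restricted set-builder on the $P$-side with the subset relation $C^{\mathrm{ext}} \subseteq \Lambda$ on the $C^{\mathrm{ext}}$-side; without it, one could in principle worry about $\psi$ lying outside $\Lambda$, but both constructions explicitly quotient this case out. Since the lemma is stated as a consequence of the extensional reading, no further ontological commitment is required: the proof goes through in any ambient set theory satisfying extensionality and separation.

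Thus I expect the "proof" to be a two-line biconditional chain followed by one invocation of extensionality. The conceptual content lies not in the derivation but in what the identification $P = C^{\mathrm{ext}}$ then enables: in subsequent steps of Section~2 it is this very equality that lets the self-subsumption scheme be rewritten as $P \in P$ once levels are collapsed, exhibiting the advertised pathology.
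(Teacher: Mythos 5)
Your proof is correct and follows exactly the paper's own argument: the same two-step biconditional chain $\psi\in P \iff C(\psi) \iff \psi\in C^{\mathrm{ext}}$ followed by a single appeal to extensionality. The extra care about the side-condition $\psi\in\Lambda$ is harmless bookkeeping that the paper leaves implicit.
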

\begin{proof}
For all $\psi\in\Lambda$,
\(
\psi\in P \iff C(\psi) \iff \psi\in C^{\mathrm{ext}}.
\)
Hence $P=C^{\mathrm{ext}}$ by extensionality.
\end{proof}
The informal move "assume $C(P)$" requires that $P$ is a legitimate argument of $C(\cdot)$. In the present encoding, $C(\cdot)$ is a predicate on elements of $\Lambda$, so $C(P)$ is meaningful only if
\[
P \in \Lambda.
\]
But $P$ was constructed as a subset $P\subseteq \Lambda$, i.e. a law-set (a set of laws, possibly defined under certain criteria) rather than a law. Treating $P$ simultaneously as an element of $\Lambda$ is a type collapse \cite{Russell1908}.

One can force this by choosing a universe $\Lambda$ that contains sets among its "laws". In that case, the extensional collapse becomes toxic:

\begin{proposition}
Assume the extensional setup above, and also assume the type-collapsing condition $P\in\Lambda$ so that $C(P)$ is well-formed. Then
\[
C(P)\quad\Longrightarrow\quad P\in P.
\]
\end{proposition}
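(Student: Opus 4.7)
The plan is to read off the conclusion directly from the defining property of $P$ as soon as the type-collapse assumption has been invoked to license treating $P$ as a law-candidate. Concretely, the proof rests on only three ingredients already in place: the set-builder definition $P = \{\psi \in \Lambda : C(\psi)\}$, the hypothesis $P \in \Lambda$ (so that $P$ is eligible to appear on the left of "$\,\cdot\, \in P$"), and the hypothesis $C(P)$ (which is now well-formed by the same assumption).

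First, I would recall that for every $\psi \in \Lambda$ the defining biconditional
\[
\psi \in P \iff C(\psi)
\]
holds by construction of $P$. Second, I would instantiate this biconditional at $\psi := P$. This instantiation is legal precisely because the type-collapsing condition $P \in \Lambda$ has been assumed, which is the whole point of the proposition: the move is not mathematically hard, but it is exactly the step where the encoding becomes pathological. Third, substituting into the biconditional together with the hypothesis $C(P)$ gives $P \in P$, which is the desired conclusion.

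There is no genuine obstacle in the deduction itself; the statement is essentially a one-line consequence of Lemma~\ref{lem:collapse} combined with the added assumption $P \in \Lambda$. The substantive content of the proposition lies not in the inference but in the diagnosis: exactly the repair that makes $C(P)$ syntactically meaningful, namely admitting law-sets as law-candidates, is what forces $P$ into a Russell-style self-membership relation. I would therefore present the proof as a short syllogism and reserve any commentary for a remark pointing out that the implication cannot be blocked without dropping either the extensional encoding or the type-collapsing assumption, thereby motivating the typed reconstruction developed in the next section.
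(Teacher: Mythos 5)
Your proof is correct and is essentially the paper's own argument: both reduce to unfolding the set-builder definition of $P$ (equivalently, Lemma~\ref{lem:collapse}'s identification $P=C^{\mathrm{ext}}$) and instantiating the membership biconditional at $\psi:=P$, which the type-collapse hypothesis $P\in\Lambda$ licenses. Your added observation that the pathology is located precisely in that instantiation step matches the paper's intended diagnosis.
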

\begin{proof}
By definition, $C(P)$ means $P\in C^{\mathrm{ext}}$. By Lemma~\ref{lem:collapse}, $C^{\mathrm{ext}}=P$, hence $P\in P$.
\end{proof}

\begin{corollary}
In ZFC with Foundation, the conjunction of the extensional reconstruction and $C(P)$, made well-formed by $P\in\Lambda$, is inconsistent.
\end{corollary}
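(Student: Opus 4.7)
The plan is to chain Proposition~\ref{prop:selfmem} with the Axiom of Foundation. Granting all hypotheses of the corollary, namely the extensional setup of Lemma~\ref{lem:collapse}, the type-collapse condition $P\in\Lambda$, and the admissibility claim $C(P)$, Proposition~\ref{prop:selfmem} directly delivers $P\in P$. The remaining task is to observe that this conclusion is incompatible with Foundation, so the joint assumption cannot hold in any model of ZFC with Regularity.

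To spell out the incompatibility, I would form the singleton $\{P\}$ (which exists by Pairing) and apply Foundation to it. Foundation demands an $\in$-minimal element of $\{P\}$; the only candidate is $P$ itself, so we would need $P\cap\{P\}=\emptyset$. However, $P\in P$ together with $P\in\{P\}$ yields $P\in P\cap\{P\}$, contradicting Foundation. Hence the conjunction of the stated hypotheses is inconsistent with ZFC + Foundation.

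Since Proposition~\ref{prop:selfmem} already does the heavy lifting, there is no genuine obstacle; the corollary is essentially an immediate Foundation-trap for the previous result. The only care needed is bookkeeping: one must read the premises jointly (the extensional encoding of $C$, the level-collapse $P\in\Lambda$, and $C(P)$), since the inconsistency arises from their combination rather than from any single one of them in isolation against Foundation. It is also worth noting, for the purposes of the later sections, that this routes the diagnosis through Regularity alone, which makes clear that the pathology is set-theoretic rather than logical in nature, and thus motivates the typed repair to follow.
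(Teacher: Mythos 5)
Your proof is correct and follows exactly the route the paper intends: the corollary is stated without an explicit proof, but it is plainly meant as an immediate consequence of Proposition~\ref{prop:selfmem} ($C(P)\Rightarrow P\in P$) combined with the standard fact that Foundation, applied to $\{P\}$, rules out $P\in P$. Your spelled-out Regularity argument is just the fully detailed version of that same step.
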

In non-well-founded set theories, $P\in P$ can be consistent \cite{Aczel1988}. However, in the extensional encoding, the intended explanation
\[
\text{"$P$ is true because $C(P)$"}
\]
reduces to
\[
\text{"$P$ is true because $P\in P$"}.
\]
Since this provides no independent ground beyond $P$'s self-membership, it amounts to a trivial self-endorsement, and thus makes explicit the unfruitfulness of the extensional formulation.

\section{A Typed Fixed Point Reconstruction}

The central repair to this problem is to keep types distinct:
\[
\text{(laws)}\;\;\psi\in\Lambda
\qquad\text{vs.}\qquad
\text{(law-sets)}\;\;S\subseteq\Lambda.
\]
\begin{definition}
Let $\Lambda$ be a set of Gödel codes of candidate laws (sentences). Let $\Psi:=\Pow(\Lambda)$ be the complete lattice of law-sets ordered by $\subseteq$.\footnote{The powerset $(\mathcal P(\Lambda),\subseteq)$ is a complete lattice. For any family $\{S_i\}_{i\in I}\subseteq\mathcal P(\Lambda)$, arbitrary joins are unions and arbitrary meets are intersections \cite{davey2002}: 
\[
\bigvee_{i\in I} S_i = \bigcup_{i\in I} S_i,
\qquad
\bigwedge_{i\in I} S_i = \bigcap_{i\in I} S_i.
\]
}
\end{definition}

\begin{definition}
Let $C:\Psi\times\Lambda\to\{\top, \bot \}$ be a Boolean-valued predicate depending on a background law-set $S$ and a candidate law $\psi$.\footnote{In the metalanguage, $C(S,\psi)=\top$ is read as "relative to the background law-set $S$, the candidate law $\psi\in\Lambda$ satisfies the criterion $C$" (and $C(S,\psi)=\bot$ as its negation). } Define
\[
\F:\Psi\to\Psi,\qquad
\F(S):=\{\psi\in\Lambda : C(S,\psi)\}.
\]
\end{definition}

\begin{assumption}
\label{ass:mono}
Assume $\F$ is monotone\footnote{Many tempting criteria are not monotone in $S$. For instance, if $C(S,\psi)$ means "$S\cup\{\psi\}$ is consistent", then enlarging $S$ can break consistency, so monotonicity may fail. By contrast, if one takes
\[
C(S,\psi)\; \equiv\; (S \vdash \psi),
\]
then $C$ is monotone in its first argument, that is, $S\subseteq T$ and $S\vdash \psi$, then also $T\vdash \psi$ for any standard, structural consequence relation. For this reason, and also the intricacies coming up with Gödel's incompleteness theorems, we are not going to consider consistency as an internal criterion; that is, a criterion to be assessed \emph{in} the formal language itself, but keep it as a metalanguage criterion.
}:
\[
S\subseteq T \;\Rightarrow\; \F(S)\subseteq \F(T).
\]
\end{assumption}
Since $(\Psi,\subseteq)$ is a complete lattice and $\F$ is monotone, fixed points exist:
\begin{theorem}[Tarski's fixed point theorem \cite{Tarski1955} in specialised form]
Let $(\Psi,\subseteq)$ be a complete lattice and $\F:\Psi\to\Psi$ monotone. Then the set of fixed points of $\F$ is a complete lattice; in particular, there exists a least fixed point, denoted $\mu\F$.
\end{theorem}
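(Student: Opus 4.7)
The plan is to prove the theorem in the standard two-stage form: first exhibit the least fixed point $\mu\F$ as the meet of a canonical pre-fixed-point collection, then lift the construction to arbitrary subsets of fixed points to recover the full complete-lattice structure. For the first stage, I would introduce
\[
A := \{\,S \in \Psi : \F(S) \subseteq S\,\},
\]
which is nonempty because the top element $\top \in \Psi$ trivially satisfies $\F(\top) \subseteq \top$. Set $\mu := \bigwedge A$, which exists by completeness. For any $S \in A$, monotonicity gives $\F(\mu) \subseteq \F(S) \subseteq S$; taking the meet over $S \in A$ yields $\F(\mu) \subseteq \mu$, so $\mu \in A$. One further application of $\F$ together with monotonicity gives $\F(\F(\mu)) \subseteq \F(\mu)$, hence $\F(\mu) \in A$ and therefore $\mu \subseteq \F(\mu)$ by definition of $\mu$ as the meet of $A$. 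Antisymmetry closes the equality $\F(\mu) = \mu$; every other fixed point $T$ trivially lies in $A$, so $\mu \subseteq T$ and $\mu$ is indeed the least fixed point.

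For the second stage, given any family $X \subseteq \mathrm{Fix}(\F)$ I would form $u := \bigvee X$ in $\Psi$. Monotonicity applied to each $\psi \in X$ yields $\psi = \F(\psi) \subseteq \F(u)$, so $u \subseteq \F(u)$. The interval $[u, \top]$ is itself a complete lattice, and the two inequalities $u \subseteq \F(u)$ and $\F(\top) \subseteq \top$ guarantee that $\F$ restricts to a monotone self-map on $[u,\top]$. Re-applying the first-stage construction inside $[u,\top]$ produces a least fixed point of the restriction, which by construction is the least upper bound of $X$ in $\mathrm{Fix}(\F)$. The dual argument on the interval $[\bot,\bigwedge X]$ yields greatest lower bounds, so $\mathrm{Fix}(\F)$ is itself a complete lattice.

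The main obstacle is the second stage rather than the first: the least-fixed-point construction is essentially bookkeeping with monotonicity and antisymmetry in a complete lattice, but lifting it to joins and meets inside $\mathrm{Fix}(\F)$ requires identifying the correct sub-lattice on which $\F$ restricts, since the naive $\bigvee X$ computed in $\Psi$ is generally \emph{not} itself a fixed point. The key observation is that $\bigvee X$ is only a post-fixed point sitting at the bottom of a stable interval whose internal least fixed point supplies the correct join in $\mathrm{Fix}(\F)$. Once this shift is recognised the argument loops back to stage one, so each join or meet inside $\mathrm{Fix}(\F)$ is computed by a fresh invocation of the least-fixed-point construction on an appropriately restricted sub-lattice, rather than by a direct lattice-theoretic manipulation of fixed points.
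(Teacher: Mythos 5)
Your proof is correct, but there is nothing in the paper to compare it against: the paper does not prove Theorem~\ref{thm:tarski} at all. It imports the result with a citation to Tarski (1955) and immediately uses its consequences, namely the identity $\mu\F=\bigcap\{S\in\Psi:\F(S)=S\}$ and the fixed-point property $\F(\mu\F)=\mu\F$. Your argument is the standard Knaster--Tarski proof and fills this gap faithfully: stage one correctly obtains $\mu\F$ as $\bigwedge\{S:\F(S)\subseteq S\}$ via the pre-fixed-point set, and stage two correctly observes that for $X\subseteq\mathrm{Fix}(\F)$ the join $u=\bigvee X$ taken in $\Psi$ is only a post-fixed point, so that one must restrict $\F$ to the complete sublattice $[u,\top]$ (stability following from $u\subseteq\F(u)$ and monotonicity) and rerun stage one there to get the join inside $\mathrm{Fix}(\F)$; the dual interval handles meets. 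One small point worth registering against the paper's text: the paper's displayed formula characterises $\mu\F$ as the intersection of all \emph{fixed} points, whereas your construction uses the meet of all \emph{pre-fixed} points ($\F(S)\subseteq S$). The two coincide once the theorem is proved, but only the pre-fixed-point version supports the direct argument you give, since one cannot show a priori that the intersection of the fixed points is itself fixed; your choice is the right one, and it is also the version that makes the paper's later Lemma~\ref{lem:bottom-lfp} an easy corollary.
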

By Tarski's fixed point theorem, the set of all fixed points of a monotone operator $\F:\Psi \to \Psi$ forms a complete lattice. The least fixed point is given by
\[
    \mu \F \;=\; \bigcap \{\, S \in \Psi \;:\; \F(S)=S \,\}.
\]
Thus
\[
    \F(\mu \F) = \mu \F,
    \qquad
    \forall S \in \Psi \;\; \big(\F(S)=S \;\Rightarrow\; \mu \F \subseteq S\big).
\]
\begin{lemma}
\label{lem:bottom-lfp}
Let $(\Psi,\subseteq)$ be a complete lattice with least element $\bot_\Psi$, and let $\F:\Psi\to\Psi$ be monotone. If $\F(\bot_\Psi)$ is a fixed point, i.e.\ $\F(\F(\bot_\Psi))=\F(\bot_\Psi)$, then
\[
\mu \F \;=\; \F(\bot_\Psi).
\]
\end{lemma}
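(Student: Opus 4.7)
The plan is to establish the equality $\mu\F = \F(\bot_\Psi)$ by double inclusion, using only monotonicity, the least-element property of $\bot_\Psi$, and the defining minimality property of $\mu\F$ stated just before the lemma.

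For the first inclusion $\mu\F \subseteq \F(\bot_\Psi)$, I would invoke the hypothesis directly: $\F(\bot_\Psi)$ is an element of $\Psi$ satisfying $\F(\F(\bot_\Psi))=\F(\bot_\Psi)$, i.e.\ it is a fixed point of $\F$. By the characterisation displayed right after Theorem~\ref{thm:tarski}, namely that $\mu\F$ is contained in every fixed point, this immediately yields $\mu\F \subseteq \F(\bot_\Psi)$.

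For the reverse inclusion $\F(\bot_\Psi) \subseteq \mu\F$, I would use monotonicity. Since $\bot_\Psi$ is the least element of the lattice, $\bot_\Psi \subseteq \mu\F$. Applying the monotone operator $\F$ to both sides gives $\F(\bot_\Psi) \subseteq \F(\mu\F)$, and because $\mu\F$ is itself a fixed point, $\F(\mu\F)=\mu\F$. Hence $\F(\bot_\Psi)\subseteq \mu\F$. Combining the two inclusions delivers the claimed equality.

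There is no serious obstacle: the argument is a routine two-line application of Tarski's characterisation plus monotonicity. The only subtlety worth flagging in the write-up is that the hypothesis $\F(\F(\bot_\Psi))=\F(\bot_\Psi)$ is genuinely needed for the first inclusion (without it, $\F(\bot_\Psi)$ is merely a post-fixed point and one cannot conclude $\mu\F\subseteq \F(\bot_\Psi)$), whereas the second inclusion uses only monotonicity and the fact that $\mu\F$ is a fixed point.
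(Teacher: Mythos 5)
Your proof is correct and follows essentially the same route as the paper: the key step in both is applying monotonicity to $\bot_\Psi\subseteq X$ for a fixed point $X$ (you take $X=\mu\F$, the paper takes an arbitrary fixed point) and then using that $\F(\bot_\Psi)$ is itself fixed to conclude leastness. The only cosmetic difference is that you phrase it as a double inclusion against the already-existing $\mu\F$, while the paper directly exhibits $\F(\bot_\Psi)$ as a lower bound of all fixed points.
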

\begin{proof}
Let $X$ be any fixed point, so $\F(X)=X$. Since $\bot_\Psi\subseteq X$, monotonicity gives $\F(\bot_\Psi)\subseteq \F(X)=X$. Thus $\F(\bot_\Psi)$ is contained in every fixed point. If $\F(\bot_\Psi)$ is itself fixed, it is the least fixed point.
\end{proof}
\begin{definition}
Define the deepest law-set by
\[
P \;:=\; \mu\F.
\]
\end{definition}
We will use only the defining fixed point property:
\[
\F(P)=P.
\]
Now, to define the membership, we set the following proposition:
\begin{proposition}
For all $\psi\in\Lambda$,
\[
\psi\in P \iff \psi\in\F(P) \iff C(P,\psi).
\]
\end{proposition}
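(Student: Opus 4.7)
The plan is to read off both biconditionals directly from what has already been established, so the proof is essentially an unpacking of definitions with one appeal to the fixed point property. No monotonicity, no Tarski, no induction is needed here; the work has already been done in setting up $P:=\mu\F$ and in noting that $\F(P)=P$.

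First I would handle the leftmost biconditional $\psi\in P\iff\psi\in\F(P)$. This is immediate from the fixed point identity $P=\F(P)$, which was recorded right after Theorem~\ref{thm:tarski} as part of the characterisation of $\mu\F$. Set equality of $P$ and $\F(P)$ yields equivalence of membership for every $\psi\in\Lambda$, and in particular for the $\psi$ at hand. I would make explicit that this is the only place where the fixed point hypothesis is used, so that the reader sees why the statement holds specifically at $P$ and would fail at an arbitrary $S\in\Psi$.

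Next I would address the rightmost biconditional $\psi\in\F(P)\iff C(P,\psi)$. This is a pure definitional unfolding: by the definition of $\F$, one has $\F(S)=\{\psi\in\Lambda : C(S,\psi)\}$ for every $S\in\Psi$, so specialising to $S=P$ gives $\F(P)=\{\psi\in\Lambda : C(P,\psi)\}$, and membership in this comprehension set is by definition equivalent to $C(P,\psi)$. Chaining the two equivalences then yields the claim.

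I do not expect a genuine obstacle; the only subtlety worth flagging is conceptual rather than technical. The proposition should not be read as asserting that $P$ is determined \emph{pointwise} by the criterion $C(\cdot,\psi)$ applied to itself, but rather as verifying that the intended self-subsumption reading, $\psi\in P\iff C(P,\psi)$, is now a theorem about a well-typed fixed point instead of the ill-typed or trivial biconditional diagnosed in Section~2. I would close the proof with a one-line remark emphasising that this equivalence is the typed analogue of the naive schema from the introduction, now licensed by $\F(P)=P$ rather than by a type collapse.
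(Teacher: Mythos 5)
Your proof is correct and matches the paper's (implicit) argument exactly: the first biconditional is the fixed point identity $P=\F(P)$ recorded just before the proposition, and the second is the definitional unfolding of $\F(P)=\{\psi\in\Lambda : C(P,\psi)\}$. The paper treats this as immediate and gives no separate proof, so your write-up is if anything more explicit than the original.
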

Thus $P$ "explains" its members in the following limited but precise sense:
\[
\psi\in P \;\Rightarrow\; C(P,\psi).
\]
For a law to "contain its own truthmaker" in the same formulation as the law itself, let $L$ be a fixed formal language and let $\Lambda\subseteq\mathbb N$ be a set of Gödel codes for $L$-sentences. Let $p\in\Lambda$ be the code of a sentence that expresses the meta-principle internally by quantifying over codes. Schematically, we may write
\[
p \;\equiv\; \forall n\in\Lambda\;\Big(C(P,n)\ \Rightarrow\ \mathsf{Obtain}(n)\Big),
\]
where $\mathsf{Obtain}(n)$ is read as "the law coded by $n$ obtains" (i.e.\ is among the selected/actualised laws).\footnote{
We use $\mathsf{Obtain}$ rather than an object-language truth predicate $\True$ to avoid importing a separate theory of truth. A total $\True$ for the same language is delicate in strong settings (Tarski-style undefinability), while Kripke-style approaches treat $\True$ as partial via a fixed point semantics. One could nevertheless replace $\mathsf{Obtain}(n)$ by a suitably interpreted $\True(n)$ (e.g.\ metalinguistically as satisfaction in the intended model(s), or internally via a Kripke fixed point). This substitution does not affect the fixed point structure developed in the paper, since our main results concern closure of law-sets under the admissibility operator $\F(S)=\{n\in\Lambda: C(S,n)\}$ and the canonical selection $P=\mu \F$, rather than properties specific to truth predicates. \cite{Tarski1955,Kripke1975}
}
\begin{proposition}
If $C(P,p)$ holds, then $p\in P$.
\end{proposition}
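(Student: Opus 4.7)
The plan is to apply the characterisation already established in Proposition~\ref{prop:char} directly to $\psi=p$. First I would record that the hypothesis of the proposition is well-posed: by construction, $p\in\Lambda$ is a Gödel code of an $L$-sentence, so it is a legitimate input to the predicate $C(P,\cdot)$ on its second coordinate. This type check is the only place where anything could go wrong, and it is guaranteed by the typed setup introduced at the start of Section~3, where sentences (and hence their codes) live at the \emph{law} level $\Lambda$, while $P$ lives at the \emph{package} level $\Psi$.

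Next I would invoke Proposition~\ref{prop:char}, which says that for every $\psi\in\Lambda$,
\[
\psi\in P \iff C(P,\psi).
\]
Instantiating this biconditional at $\psi:=p$ gives $p\in P \iff C(P,p)$. Under the hypothesis $C(P,p)$, the right-hand side holds, so the left-hand side does too, which yields $p\in P$.

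It is worth noting what this argument does and does not use. It relies only on the fixed point equation $\F(P)=P$ (equivalently, on $P$ being \emph{any} fixed point of $\F$, not necessarily the least one), together with the defining formula $\F(S)=\{\psi\in\Lambda:C(S,\psi)\}$. Monotonicity of $\F$, Tarski's theorem, and the minimality clause from Lemma~\ref{lem:bottom-lfp} play no role in this step; they were needed only to secure the existence and canonicity of $P$ itself. The substantive content of the proposition is therefore conceptual rather than technical: it shows that whenever the internal meta-principle $p$ passes the admissibility test relative to the already-selected package $P$, the self-subsumptive closure of $P$ forces $p$ to be included among the selected laws.

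The main obstacle, such as it is, is conceptual rather than computational — one must be comfortable with the fact that $p$, although it \emph{talks about} membership in $P$, is nevertheless formally just an element of $\Lambda$ whose evaluation under $C(P,\cdot)$ is carried out in the metalanguage. Once that is granted, the proof is a one-line instantiation of Proposition~\ref{prop:char}.
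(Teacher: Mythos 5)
Your proof is correct and takes essentially the same route as the paper: the paper unfolds the definition of $\F$ to get $C(P,p)\Rightarrow p\in\F(P)$ and then applies $\F(P)=P$, while you reach the identical conclusion by instantiating Proposition~\ref{prop:char} (which packages exactly those two facts) at $\psi=p$. Your added remarks that only the fixed point property is used, not minimality or monotonicity, accurately reflect the paper's own statement that "we will use only the defining fixed point property $\F(P)=P$."
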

\begin{proof}
If $C(P,p)$ holds, then $p\in\F(P)$. Since $P$ is a fixed point, $\F(P)=P$, hence $p\in P$.
\end{proof}
The "principle includes itself" statement is now the typed membership $p\in P$ (a sentence in a set of sentences), not $P\in P$ (a set a member of itself). Thus, the Foundation problem does not arise.

Our construction used $\Psi=\Pow(\Lambda)$, as a set of Gödel codes of candidate law-sentences, but the underlying pattern is more general. Let $(T,\le)$ be a partially ordered set of candidate theories (or "law-sets"). Intuitively, $S\le T$ means "$T$ has at least as much content/commitment/information as $S$".

To model a self-subsuming principle, we need a way to say when an item (law, constraint, statement, datum) is admissible relative to a candidate theory.
\begin{definition}
Let $U$ be a universe of candidate items (laws, constraints, axioms, sentences, parameters, etc.). An admissibility predicate is any relation
\[
D \;\subseteq\; T\times U,
\qquad
D(S,u)\ \text{read: "$u$ is admissible/endorsed given $S$"}.
\]
\end{definition}
When $T$ is a space of law-sets (e.g.\ $T=\Pow(\Lambda)$), one typically identifies $U$ with the same kind of objects (e.g.\ $U=\Lambda$) and regards $S$ as endorsing exactly those $u$ satisfying $D(S,u)$.\footnote{In the following section, where we begin applying our schema to physics, we are going to consider operator algebras instead of logical sentences or Gödel codes of them, for it is both eaiser and more usual to untangle the algebraic representations pointing to the same phenomena than doing the same for the sentence/code representation in a Principia Mathematica type of formal language} This motivates the associated operator:
\begin{definition}
Assume $T$ is a collection of subsets of $U$.
Define an operator
\[
\F_D:T\to T,
\qquad
\F_D(S):=\{u\in U: D(S,u)\}.
\]
\end{definition}
\begin{proposition}
\label{prop:normal-form}
For any admissibility predicate $D$ and induced operator $\F_D$, a candidate $P\in T$ satisfies the self-subsumption biconditional
\[
\forall u\in U\;\;\big(u\in P \iff D(P,u)\big)
\]
if and only if $P$ is a fixed point of $\F_D$, i.e.\ $\F_D(P)=P$.
\end{proposition}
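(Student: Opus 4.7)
The plan is to unfold Definition~\ref{def:induced-operator} and reduce the self-subsumption biconditional to plain set-extensionality applied to $P$ and $\F_D(P)$. The argument is essentially tautological: the induced operator $\F_D$ was defined precisely so that membership in $\F_D(S)$ tracks admissibility $D(S,\cdot)$ element-by-element.

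First I would record the immediate consequence of $\F_D(P):=\{u\in U: D(P,u)\}$, namely that for every $u\in U$,
\[
u\in \F_D(P) \iff D(P,u).
\]
This is a comprehension rewrite and uses nothing but the definition.

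For the forward direction, suppose $P\in T$ satisfies $\forall u\in U\,(u\in P\iff D(P,u))$. Combining with the display above, I get $u\in P \iff u\in \F_D(P)$ for every $u\in U$. Because $T$ is a collection of subsets of $U$, both $P$ and $\F_D(P)$ are subsets of $U$, and set-extensionality therefore forces $P=\F_D(P)$. For the converse, suppose $\F_D(P)=P$. Then chaining the two biconditionals gives, for every $u\in U$,
\[
u\in P \iff u\in \F_D(P) \iff D(P,u),
\]
which is the self-subsumption biconditional.

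The only real obstacle is the type-theoretic bookkeeping that motivated the whole section: one must ensure that $P\subseteq U$ so that the universal quantifier over $u\in U$ is enough to pin $P$ down via extensionality. This is guaranteed by the ambient assumption that $T$ consists of subsets of $U$, and the Russell-style pathology diagnosed in Section~2 is sidestepped because $P$ is never treated as an element of $U$. Once this discipline is respected, the proof collapses to a one-line unfolding, and the proposition should be read as articulating that \emph{fixed point equation} and \emph{self-subsumption biconditional} are merely two notations for the same closure property.
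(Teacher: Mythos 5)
Your proof is correct and follows essentially the same route as the paper's: unfold the definition of $\F_D(P)$ to get $u\in\F_D(P)\iff D(P,u)$, then identify the self-subsumption biconditional with $P=\F_D(P)$ via set-extensionality, using that $T$ consists of subsets of $U$. The additional remark about type bookkeeping is a faithful gloss on the ambient hypothesis of Definition~\ref{def:induced-operator} rather than a new ingredient.
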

\begin{proof}
By definition, $u\in \F_D(P)\iff D(P,u)$, hence
\[
u\in P \iff D(P,u)
\quad\Longleftrightarrow\quad
u\in P \iff u\in \F_D(P),
\]
which is equivalent to $P=\F_D(P)$ by extensionality of sets.
\end{proof}
Proposition~\ref{prop:normal-form} says any attempt to characterise a fundamental law-set $P$ by a membership rule that itself refers to $P$ is exactly a fixed point equation. Our $C(P,\psi)$ is one instance with $D=C$ and $U=\Lambda$. No monotonicity or lattice assumptions are required for Proposition \ref{prop:normal-form} to hold.

By contrast, the existence of fixed points and the availability of a canonical choice, such as the least fixed point $\mu \F$, do require additional hypotheses. In the closure-style regime considered here, we assume $\F$ is monotone on a complete lattice, so that Tarski guarantees fixed points and yields distinguished extremal solutions. Non-monotone admissibility notions (for instance consistency or optimality constraints) may still admit fixed points but fall outside the scope of the Tarski canonicity argument. For this reason, to get the existence of fixed points systematically, one typically assumes a lattice-theoretic structure and monotonicity. At the beginning of this section, we have a complete lattice $\Psi=\Pow(\Lambda)$ and (by Assumption~\ref{ass:mono}) a monotone operator $\F$. Hence, Tarski applies and yields least and greatest fixed points; in particular $P=\mu\F$ is well-defined.

Importantly, Tarski-style existence does not require uniqueness. In fact, the general conclusion is the opposite: Fixed points can form a rich lattice. If our aim is to analyse where such assumptions lead, this is a feature:
\begin{enumerate}[label = \textit{\roman*}.]
\item different fixed points correspond to different admissible global law-sets satisfying the same meta-criterion,\footnote{One can conceive a duality \textit{à la} De Haro and Butterfield \cite{deharo2025duality, deharo2025geometricviewtheories}, where any such distinct fixed point characterisation of a theory is isomorphic.}
\item the least and greatest fixed points give canonical extremal solutions,
\item studying the lattice of fixed points becomes a principled way to study the space of possible "fundamental" packages consistent with a given admissibility concept.
\end{enumerate}
So the fixed point framework is naturally a constraint-analysis engine. The genuinely structural content is:
\[
\text{(self-subsumption schema)} \quad\Rightarrow\quad \text{(fixed point equation)}.
\]
Choosing which fixed point (least, greatest, or another) is an extra selection principle. Our choice of $\mu \F$ corresponds to a minimality stance.

The equation $P=\F(P)$ is still a circular characterisation. The fixed point method does not erase circularity; it replaces it with a non-vicious circularity\footnote{
Here "non-vicious circularity" means that self-reference is implemented as a \emph{constraint/closure} condition rather than as a mere justificatory loop. That is, one specifies a monotone operator $\F$ and then takes the least fixed point $\mu \F$, so the circularity $P=\F(P)$ determines $P$ by minimal closure rather than presupposing it. This is the familiar pattern behind Kripke-style fixed point treatments of self-reference in truth theory, and it aligns with contemporary grounding discussions in which some circles need not be vicious, once a principled notion of "viciousness" is adopted. \cite{Tarski1955,Kripke1975,Bliss2013}
}
resolved by a selection principle, that is, "choose the least fixed point $P=\mu\F$".

Given $\F$, $P=\mu\F$ is canonical. But the choice of $\F$ (equivalently, of the admissibility criterion $C$) is a remaining datum. In another light, such a choice would not say much about the structural content of the theory, and would work as a choice of gauge.

\section{Physical Instantiation}
The fixed point normal form is a meta-constraint template. To apply it to physics, fix (i) a candidate domain of laws $\Sigma$ and (ii) a lattice of theory-packages $\mathcal T$ (such as $\mathcal P(\Sigma)$, or as in the Examples section below, lattice of linear subspaces when $\Sigma$ is a vector space) together with an admissibility operator $\F:\mathcal T\to\mathcal T$ encoding the chosen constraints. A package counts as admissible precisely when it is closed:
\[
P=\F(P).
\]
In physics, certain requirements function less like contingent dynamical laws and more like constitutive constraints. Examples include unitarity, symmetry principles, locality/causality constraints, renormalisability, etc. A natural reading of our framework is:
\begin{quote}
\emph{A package counts as a physical theory only if it is closed under the implications of these admissibility constraints.}
\end{quote}
Thus, $\F(S)$ returns the admissible completion determined by the constraints extracted from $S$.\footnote{Note that it does not always have to contain $S$ itself once it is used as a seed to select an admissible completion, as in the case of the QED example in the next section.} As an intuitive example, let $\mathcal T$ be a space of partial "theory data" ordered by "information content". Define $\F(S)$ to be the admissible completion operator of $S$ under a certain set of admissibility constraints, which ought to be chosen by the physicist according to the subject in question. Then, fixed points are solutions that give canonical admissible packages; that is, $S$ is admissible iff $\F(S)=S$.

In this reading, our framework organises the landscape such that the fixed points are the admissible theories, and extremal solutions ($\mu \F$, $\nu \F$) give canonical minimal structures.

Therefore, one can state a "metaprinciple" as follows:
\begin{quote}
\textbf{Meta-principle:}
Any proposal in which "the laws of nature" are characterised as exactly those statements satisfying an admissibility criterion that itself depends on the totality of laws has, in general, fixed point form $P=\F(P)$, with a canonical minimal choice, $\mu \F$.
\end{quote}

\subsection{A General Logical Architecture}
While the logical schema outlined guarantees fixed points for any monotone operator, we must identify a physically motivated operator that satisfies this property. As noted previously, simple consistency checks are often non-monotone and metatheoretically complex. In this section, we propose that the interplay between \emph{Dynamical Laws} and \emph{Symmetry Groups} provides the robust, monotonic structure required to ground self-subsuming principles.

Let us define two abstract domains of discourse for a physical theory:
\begin{definition}[The Space of Laws, $\Sigma$]
Let $\Sigma$ be a chosen candidate set of possible mathematical formulations of dynamical constraints.
\end{definition}

\begin{definition}[The space of symmetry groups, $\Gamma$]
Let $\Gamma$ be a poset of admissible symmetry groups acting on the system, ordered by inclusion. In typical applications $\Gamma$ is a family of subgroups of a fixed ambient transformation group (e.g.\ internal gauge transformations, or $\operatorname{Diff}(M)$), and we assume $\Gamma$ is closed under the joins needed below (e.g.\ it is a complete lattice, or at least admits the join of the relevant collections of subgroups).\footnote{Depending on the context, elements of $\Sigma$ could be terms in a Lagrangian, differential equations of motion, or elements of a $C^*$-algebra; and that of $\Gamma$ could be rotation groups, gauge groups, or diffeomorphism group, etc.

A further input is an equivalence notion capturing when two presentations encode the same physics, for instance, via gauge equivalence, field redefinitions, dualities. One may enforce this either by quotienting the candidate domain $\Sigma$ by the chosen equivalence, or by requiring the admissibility criterion to be invariant under it.
}
\end{definition}
Next, we posit  a binary relation $R\subseteq \Sigma\times \Gamma$, where $\phi\,R\,G$ means that the law $\phi$ respects the symmetry group $G$.\footnote{In physics terminology, this can be understood as "$\phi$ is invariant under the group $G$", with invariance understood modulo the chosen equivalence on $\Sigma$.} This relation induces two canonical mappings:
\begin{definition}[Invariant map, $\operatorname{Inv}:\Gamma\to \mathcal T$]
For $G\in\Gamma$, define
\[
\operatorname{Inv}(G)\;:=\;\{\phi\in \Sigma:\ \phi\,R\,G\},
\]
viewed as an element of the package lattice $\mathcal T$ (e.g.\ if $\mathcal T$ is a lattice of linear subspaces, then $\operatorname{Inv}(G)$ is automatically a subspace because linear combinations of invariant laws are invariant).
\end{definition}
\begin{definition}[Symmetry map, $\operatorname{Sym}:\mathcal T\to\Gamma$]
For $S\in\mathcal T$, define
\[
\operatorname{Sym}(S)\;:=\;\bigvee \{\,G\in\Gamma:\ S\subseteq \operatorname{Inv}(G)\,\},
\]
i.e.\ the greatest symmetry in $\Gamma$ compatible with $S$ (the join exists by the standing closure assumption on $\Gamma$).
\end{definition}
With these definitions, one has the Galois correspondence
\[
S\subseteq \operatorname{Inv}(G)\quad\Longleftrightarrow\quad G\subseteq \operatorname{Sym}(S),
\]
so $\operatorname{Inv}$ and $\operatorname{Sym}$ form an antitone Galois connection between packages and symmetry groups. In particular, $\operatorname{Inv}$ is antitone and $\operatorname{Sym}$ is antitone, hence their composition is monotone. The logic follows directly from the reversal of the inclusion order twice:
\[
    S_1 \subseteq S_2 \implies \operatorname{Sym}(S_2) \subseteq \operatorname{Sym}(S_1) \implies \operatorname{Inv}(\operatorname{Sym}(S_1)) \subseteq \operatorname{Inv}(\operatorname{Sym}(S_2)).
\]
We then obtain a canonical invariance-based completion operator
\[
\F(S)\;:=\;\operatorname{Inv}(\operatorname{Sym}(S)).
\]
This operator is monotone on $\mathcal T$, so by Tarski's theorem it admits fixed points. Fixed points $P=\F(P)$ are precisely packages that are symmetry-closed relative to the chosen candidate domain and admissible symmetry family.

More general admissibility operators may be built by composing $\operatorname{Sym}$ and $\operatorname{Inv}$ with additional monotone meta-maps, such as the localisation map on $\Gamma$, encoding a gauge principle in the QED example below. Such operators remain monotone but need not satisfy $S\subseteq \F(S)$, so a seed may select an admissible completion without being contained in it.

\subsection{Cases with More Structural Constraints}
It is important to emphasise that the specific Law--Symmetry duality need not be the only structural constraint on a physical theory. In practice, mature theories must satisfy multiple independent requirements, such as symmetry (gauge, Lorentz, diffeomorphism, etc.), causality, and unitarity, simultaneously. A strength of the fixed point schema is that such requirements can be composed into a single admissibility operator.

In the abstract discussion, the package lattice $\mathcal T$ need not be a powerset $\Pow(\Sigma)$. In the Examples section, for instance, $\Sigma$ is a vector space of candidate Lagrangian terms and $\mathcal T$ is taken to be the complete lattice of linear subspaces of $\Sigma$. In such cases, an operator defined naively by
\[
S\longmapsto \{\psi\in\Sigma:\ C(S,\psi)\}
\]
need not land in $\mathcal T$. To define admissibility operators uniformly on an arbitrary package lattice $\mathcal T$, we introduce a \emph{package closure} map that projects subsets of $\Sigma$ to the least package containing them.

\begin{definition}
Let $\mathcal T\subseteq \Pow(\Sigma)$ be a complete lattice of packages ordered by $\subseteq$.
Define the closure operator
\[
\langle X\rangle_{\mathcal T}\;:=\;\bigwedge\{\,P\in\mathcal T:\ X\subseteq P\,\}\in\mathcal T,
\qquad X\subseteq \Sigma.
\]
\end{definition}
If $\mathcal T=\Pow(\Sigma)$ then $\langle X\rangle_{\mathcal T}=X$. If $\Sigma$ is a vector space and $\mathcal T$ is the lattice of linear subspaces, then $\langle X\rangle_{\mathcal T}=\mathrm{span}(X)$.

Each physical principle may be encoded as a monotone operator
\[
\F_i:\mathcal T\to\mathcal T,
\]
for instance via a predicate $C_i(S,\psi)$ by setting
\[
\F_i(S)\;:=\;\Big\langle \{\psi\in\Sigma:\ C_i(S,\psi)\}\Big\rangle_{\mathcal T}.
\]

Given a family of monotone operators $\{\F_i\}_i$, define their combined operator by the meet in $\mathcal T$:
\[
\F_{\mathrm{tot}}(S)\;:=\;\bigwedge_i \F_i(S).
\]
In the examples, $\mathcal T$ is closed under intersections, so this meet is simply $\bigcap_i \F_i(S)$.

\begin{lemma}
If each $\F_i$ is monotone, then $\F_{\mathrm{tot}}$ is monotone.
\end{lemma}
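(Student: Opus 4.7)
The plan is to unfold the definition of $\F_{\mathrm{tot}}$ and apply the universal property of meets in a complete lattice. Given $S\subseteq T$ in $\mathcal T$, I would first invoke monotonicity componentwise: for every index $i$, $\F_i(S)\subseteq \F_i(T)$. This reduces the problem to the familiar fact that pointwise-meets of monotone maps into a complete lattice are themselves monotone, so no further structural input (beyond completeness of $\mathcal T$, which is standing) is required.

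Second, I would exploit the meet as a greatest lower bound. Fix an arbitrary index $j$. By definition of $\bigwedge$ as a lower bound, $\bigwedge_i \F_i(S)\subseteq \F_j(S)$, and composing with the inclusion $\F_j(S)\subseteq \F_j(T)$ obtained in the previous step yields $\bigwedge_i \F_i(S)\subseteq \F_j(T)$. Since $j$ was arbitrary, $\bigwedge_i \F_i(S)$ is a lower bound for the family $\{\F_i(T)\}_i$. By the universal property of the meet, this lower bound is dominated by the greatest lower bound, so $\bigwedge_i \F_i(S)\subseteq \bigwedge_i \F_i(T)$, which is exactly $\F_{\mathrm{tot}}(S)\subseteq \F_{\mathrm{tot}}(T)$.

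I do not expect a genuine obstacle. The only subtlety worth flagging is that $\bigwedge_i \F_i(S)$ should be interpreted as the meet in $\mathcal T$ rather than as the set-theoretic intersection in $\Pow(\Sigma)$; these coincide whenever $\mathcal T$ is closed under intersections (as in the examples), but in general completeness of $\mathcal T$ is what guarantees the meet exists and lands back in $\mathcal T$. This is already built into the hypotheses, so the argument goes through uniformly for arbitrary index sets, finite or infinite.
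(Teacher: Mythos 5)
Your proof is correct and follows essentially the same route as the paper's: componentwise monotonicity followed by monotonicity of the meet. The paper compresses the second step into one line, whereas you justify it explicitly via the greatest-lower-bound property; that added detail (and your remark about the meet being taken in $\mathcal T$) is a faithful elaboration rather than a different argument.
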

\begin{proof}
If $S\subseteq S'$ then $\F_i(S)\subseteq \F_i(S')$ for all $i$.
Hence $\bigwedge_i \F_i(S)\subseteq \bigwedge_i \F_i(S')$, so $\F_{\mathrm{tot}}$ is monotone.
\end{proof}

Thus, on a complete lattice of theory-packages $\mathcal T$, Tarski's theorem applies to $\F_{\mathrm{tot}}$. In particular, $\F_{\mathrm{tot}}$ admits fixed points, and its least fixed point $\mu\F_{\mathrm{tot}}$ gives a canonical minimal package stable under all encoded principles.

This shows that the fixed point view is not tied to any single physical hypothesis, but provides a general logical architecture: Once a domain of candidates $\Sigma$ and admissibility constraints are specified, "the theory" is a structure in the lattice of theory-packages. Additional principles, known or yet to be discovered, correspond to refining $\F_{\mathrm{tot}}$, which updates the resulting stable packages accordingly. Of course, choosing or discovering such criteria is itself a philosophical problem to be discussed, which will not be attempted here.

Fixed points of the operator $P=\F(P)$ may be called closed theory-packages relative to $\Sigma$. The laws in $P$ determine a symmetry content $\operatorname{Sym}(P)$, and $\operatorname{Inv}(\operatorname{Sym}(P))$ returns the set of laws compatible with that symmetry content within the chosen candidate domain. Thus, $P$ contains no terms that break its admitted symmetries, and it is complete with respect to the chosen convention. Tarski's theorem guarantees a least fixed point, which is the minimal closed package relative to $\F$ (and to $\Sigma$).

In summary, the Galois connection provides a mathematically well-behaved admissibility operator built from invariance, avoiding the metatheoretic difficulties of treating raw "consistency checking" as the primary admissibility filter.

\section{Examples}
In the physics examples below, we show that once one fixes a natural candidate domain and encodes standard physical constraints (symmetry, locality, and a truncation criterion) as a monotone completion operator, familiar theories such as QED and GR can be presented as fixed points of that operator.\footnote{Here, for clarity, we should once again note that this approach is constructive; that is, the following is not an argument of derivation.} For more information on the subjects, the reader is kindly directed to \cite{Carroll2019, MisnerThorneWheeler1973, PeskinSchroeder2015, Weinberg1995}.

\subsection{I. Quantum Electrodynamics}
\paragraph{Operator domain.}
Fix four-dimensional Minkowski spacetime.\footnote{One can as well state "Lorentz invariance" as one of the symmetry principles explicitly.} Let $\Sigma_{\le 4}$ denote the vector space of local polynomial Lagrangian densities built from a Dirac spinor $\psi(x)$ and a vector potential $A_\mu(x)$, and finitely many derivatives, subject to:
\begin{enumerate}
    \item a power-counting bound of dimension $\le 4$ \cite{PeskinSchroeder2015};
    \item an equivalence relation $\sim$ identifying densities that differ by a total derivative and by standard "inessential" redundancies caused by integration by parts.
\end{enumerate}
We write $[\mathcal O]\in \Sigma_{\le 4}$ for the equivalence class of a density $\mathcal O$. This implements the idea that physics is insensitive to Lagrangian terms differing by boundary terms.
\paragraph{Theory packages.}
A theory package $S$ will be represented as a linear subspace $S\subseteq \Sigma_{\le 4}$, spanned by a chosen generating set of operator classes. In particular, if $[\mathcal O_1],[\mathcal O_2]\in S$, then $a[\mathcal O_1]+b[\mathcal O_2]\in S$ for all scalars $a,b$, where the linear closure is treated as background structure instead of a part of the admissibility operator.\footnote{The collection of linear subspaces of $\Sigma_{\le 4}$ ordered by inclusion forms a complete lattice: Meets are intersections and joins are linear spans of unions.}
\paragraph{Symmetry space.}
Let $\Gamma$ be a poset of symmetry groups acting on the fields, ordered by inclusion. Since we fix Minkowski spacetime and restrict attention to Lorentz-scalar densities, spacetime symmetries are treated as background constraints and omitted from $\Gamma$ (although it is trivial to incorporate them, as well as the discrete symmetries).\footnote{If the spinor field were not massive, it would also admit a transformation of the form
\[
    \psi \mapsto e^{i \theta \gamma^5} \psi,
\]
yielding an additional axial $U(1)_A$ symmetry at the classical level, so the continuous internal symmetry enlarges to $U(1)_V\times U(1)_A$ in that limit.
}
Since $\psi$ carries no additional flavour or internal index, the relevant internal symmetry transformations are constant, non-spacetime field redefinitions $\psi\mapsto U\psi$ that preserve the free Dirac density $\bar\psi(i\gamma^\mu\partial_\mu-m)\psi$. Preservation of the mass term requires $\bar\psi\psi\mapsto \bar\psi\psi$, and preservation of the kinetic term requires $\bar\psi\gamma^\mu\partial_\mu\psi\mapsto \bar\psi\gamma^\mu\partial_\mu\psi$. These two requirements imply that $U$ commutes with the Dirac bilinear structure, hence $U$ can only act as multiplication by a constant phase. Imposing unitarity restricts this to $U=e^{i\theta}$, so the continuous internal symmetry group is $U(1)_{\mathrm{glob}}$. The free theory has a global symmetry, $U(1)_{\mathrm{glob}}$, whereas with the addition of the gauge principle as one of the criteria, we also have a local one, $U(1)_{\mathrm{loc}}$. Therefore,
\[
\Gamma \;:=\;\{\,\mathbf{1},\,U(1)_{\mathrm{glob}},\,U(1)_{\mathrm{loc}}\,\},
\qquad
\mathbf{1}\subset U(1)_{\mathrm{glob}} \subset U(1)_{\mathrm{loc}}.
\]
Hence $\Gamma$ has a greatest element given by
\[
\top_\Gamma \;=\; U(1)_{\mathrm{loc}}.
\]
\paragraph{Invariant map.}
For $G\in\Gamma$, define
\[
\operatorname{Inv}(G)\;:=\;\{\, [\mathcal O]\in \Sigma_{\le 4} : [\mathcal O]\ \text{is invariant under}\ G \,\}.
\]
Here, invariance means that $[\mathcal O]$ is invariant if for one (equivalently any) representative $\mathcal{O}$, the transformed density differs from $\mathcal O$ by $\sim$.
\paragraph{Symmetry map.}
For a package $S\subseteq \Sigma_{\le 4}$, define
\[
\operatorname{Sym}(S)\;:=\;\max\{\,G\in\Gamma:\ S\subseteq \operatorname{Inv}(G)\,\}.
\]
Since $U(1)_{\mathrm{glob}}\subset U(1)_{\mathrm{loc}}$, this maximum exists.\footnote{In the examples, $\Gamma$ is a finite chain, so we identify $\operatorname{Sym}(S)$ with the maximal symmetry group in $\Gamma$, that leaves $S$ invariant.}
\paragraph{Galois correspondence.}
With the above definitions one has the standard antitone correspondence:
\[
S\subseteq \operatorname{Inv}(G)\quad\Longleftrightarrow\quad G\subseteq \operatorname{Sym}(S),
\]
so $\operatorname{Inv}$ and $\operatorname{Sym}$ form a Galois connection between operator packages and symmetry groups.

The step from global to local symmetry is not obtained automatically from $\operatorname{Sym}(S)$; it represents an additional physical principle, the gauge principle, encoded here as a map on symmetry groups.

\paragraph{Localisation operator.}
Introduce an order-preserving map, $\operatorname{Loc}:\Gamma\to\Gamma$, as
\[
\operatorname{Loc}(\mathbf{1})=\mathbf{1},\qquad
\operatorname{Loc}\big(U(1)_{\mathrm{glob}}\big)=U(1)_{\mathrm{loc}},\qquad
\operatorname{Loc}\big(U(1)_{\mathrm{loc}}\big)=U(1)_{\mathrm{loc}}.
\]
It should be noted that, in this example, $\operatorname{Loc}$ fixes the top element,
\[
\operatorname{Loc}(\top_\Gamma)=\top_\Gamma,
\]
reflecting the idea that once the symmetry is already local, localisation does nothing further. An element $\alpha(x) \in U(1)_\mathrm{loc}$ acts by
\[
\psi(x)\mapsto e^{i\alpha(x)}\psi(x),
\;
\bar\psi(x)\mapsto \bar\psi(x)e^{-i\alpha(x)},
\;
A_\mu(x)\mapsto A_\mu(x)-\frac{1}{e}\partial_\mu\alpha(x),
\]
where the coupling $e$ is fixed so that $D_\mu=\partial_\mu+ieA_\mu$ transforms covariantly.

The localisation map is not itself one of the admissibility predicates $C_i$, nor one of the induced package-operators $\F_i$, because it does not act on theory packages $S$ or on candidate laws $\psi$. Rather, $\operatorname{Loc}$ is part of the background structure used to define an admissibility principle associated with local gauge symmetry. Concretely, one may introduce a predicate
\[
C_{\mathrm{gauge}}(S,[\mathcal O])\;:=\;\Big([\mathcal O]\in \operatorname{Inv}\big(\operatorname{Loc}(\operatorname{Sym}(S))\big)\Big),
\]
which reads "relative to the background package $S$, the candidate operator $[\mathcal O]$ is admissible because it is
invariant under the localized symmetry extracted from $S$." The corresponding induced operator is then
\[
\F_{\mathrm{gauge}}(S)\;:=\;\{[\mathcal O]\in \Sigma_{\le 4}:\ C_{\mathrm{gauge}}(S,[\mathcal O])\},
\]
so $\operatorname{Loc}$ enters the schema via the definitional content of $C_{\mathrm{gauge}}$ and hence of $\F_{\mathrm{gauge}}$. In the present example, we fold this into the composite operator $\F_{\mathrm{QED}}(S)=\operatorname{Inv}(\operatorname{Loc}(\operatorname{Sym}(S)))$.
\paragraph{Total admissibility operator.}
Define the operator on packages
\[
\F_{\mathrm{QED}}(S)\;:=\;\operatorname{Inv}\!\big(\operatorname{Loc}(\operatorname{Sym}(S))\big)\ \subseteq\ \Sigma_{\le 4}.
\]
By construction, $\F_{\mathrm{QED}}$ is monotone with respect to $\subseteq$. Intuitively, this operator (i) diagnoses the symmetries of $S$, (ii) enforces their localisation, and (iii) outputs the theory of renormalisable operators compatible with this constraint.

\subsubsection*{The computation from a free matter seed}
Let the seed package be the span of the free Dirac kinetic term:
\[
P_0\;:=\;\mathrm{span}\Big\{\big[\bar\psi\, i\gamma^\mu\partial_\mu\psi\big]\Big\}\ \subseteq\ \Sigma_{\le 4}.
\]
Under the gauge transformation, one has
\[
\partial_\mu(e^{i\alpha(x)}\psi)
=
e^{i\alpha(x)}\big(\partial_\mu\psi+i(\partial_\mu\alpha)\psi\big),
\]
so the free term $\bar\psi i\gamma^\mu\partial_\mu\psi$ acquires an extra contribution proportional to $\partial_\mu\alpha$. Hence it is invariant for $\alpha=\text{const}$ but not for general $\alpha(x)$, and therefore
\[
\operatorname{Sym}(P_0)\;=\;U(1)_{\mathrm{glob}}.
\]
To incorporate gauge principle, apply $\operatorname{Loc}$:
\[
\operatorname{Loc}(\operatorname{Sym}(P_0))\;=\;\operatorname{Loc}\big(U(1)_{\mathrm{glob}}\big)\;=\;U(1)_{\mathrm{loc}}.
\]
Now, to compute $\operatorname{Inv}(U(1)_{\mathrm{loc}})\cap \Sigma_{\le 4}$, we list the renormalisable invariant operator classes:
\begin{enumerate}
\item \textbf{Fermion mass.}
The class $\big[\bar\psi\psi\big]$ is invariant since $e^{-i\alpha}e^{i\alpha}=1$:
\[
\bar\psi\psi \mapsto \bar\psi e^{-i\alpha} e^{i\alpha}\psi=\bar\psi\psi.
\]
\item \textbf{Photon mass}
A Proca term $\big[A_\mu A^\mu\big]$ is not invariant under $A_\mu\mapsto A_\mu-\frac{1}{e}\partial_\mu\alpha$, so $\big[A_\mu A^\mu\big]\notin \operatorname{Inv}(U(1)_{\mathrm{loc}})$. Therefore, the photon mass is excluded.
\item \textbf{Minimal coupling.}
Define the covariant derivative
\[
D_\mu := \partial_\mu + ieA_\mu.
\]
Using the transformation properties of the field and gauge field,
\[
D_\mu(e^{i\alpha}\psi)
=
(\partial_\mu+ieA_\mu)\,e^{i\alpha}\psi
=
e^{i\alpha}\big(\partial_\mu+ie(A_\mu-\tfrac{1}{e}\partial_\mu\alpha)\big)\psi
=
e^{i\alpha}D_\mu\psi,
\]
so $D_\mu\psi$ transforms like $\psi$.
Hence the class $\big[\bar\psi\, i\gamma^\mu D_\mu\psi\big]$ is invariant.
Expanding it, we have the familiar interaction term,
\[
\bar\psi\, i\gamma^\mu D_\mu\psi
=
\bar\psi\, i\gamma^\mu \partial_\mu\psi
\;-\;
e\,\bar\psi\gamma^\mu A_\mu\psi.
\]
\item \textbf{Gauge-field kinetic term.}
The field strength
\[
F_{\mu\nu}:=\partial_\mu A_\nu-\partial_\nu A_\mu
\]
is invariant because the shift piece cancels by commutativity of partial derivatives as
\[
\partial_\mu(A_\nu-\tfrac{1}{e}\partial_\nu\alpha)-\partial_\nu(A_\mu-\tfrac{1}{e}\partial_\mu\alpha)
=
\partial_\mu A_\nu-\partial_\nu A_\mu.
\]
Therefore $\big[-\tfrac14 F_{\mu\nu}F^{\mu\nu}\big]\in \operatorname{Inv}(U(1)_{\mathrm{loc}})$.
\item \textbf{No further renormalisable invariants up to equivalence.}
Dimension-$5$ operators are excluded by the $\le 4$ truncation.
A term $F_{\mu\nu}\tilde F^{\mu\nu}$ is a total derivative in the abelian case and is removed by the equivalence relation. Gauge-fixing terms are not in $\operatorname{Inv}(U(1)_{\mathrm{loc}})$; they enter only at the quantisation stage.
\end{enumerate}
\paragraph{The package.}
The image of the seed under $\F_{\mathrm{QED}}$ is
\[
\F_{\mathrm{QED}}(P_0)=\operatorname{Inv}(U(1)_{\mathrm{loc}})\subseteq \Sigma_{\le 4},
\]
whose minimal generating set may be taken as
\[
P_{\mathrm{QED}}
=
\mathrm{span}\Big\{
\big[\bar\psi\, i\gamma^\mu D_\mu\psi\big],\;
\big[\bar\psi\psi\big],\;
\big[-\tfrac14 F_{\mu\nu}F^{\mu\nu}\big]
\Big\}.
\]
In Lagrangian form, this is exactly QED:
\[
\mathcal L_{\mathrm{QED}}
=
\bar\psi(i\gamma^\mu D_\mu-m)\psi
-\frac14 F_{\mu\nu}F^{\mu\nu}.
\]
\paragraph{Fixed point property.}
The package $P_{\mathrm{QED}}$ is stable under the admissibility operator:
\[
\F_{\mathrm{QED}}(P_{\mathrm{QED}})=P_{\mathrm{QED}},
\]
since $\operatorname{Sym}(P_{\mathrm{QED}})=U(1)_{\mathrm{loc}}$, and $P_{\mathrm{QED}}$ contains precisely
the renormalisable invariants under that group within $\Sigma_{\le 4}$.
\begin{proposition}
Since $\Gamma$ has a greatest element, and $\operatorname{Loc}(U(1)_{\mathrm{loc}})=U(1)_{\mathrm{loc}}$, the package
\[
P_{\mathrm{QED}}=\operatorname{Inv}(U(1)_{\mathrm{loc}})\subseteq \Sigma_{\le 4}
\]
is the least fixed point of $\F_{\mathrm{QED}}(S)=\operatorname{Inv}(\operatorname{Loc}(\operatorname{Sym}(S)))$.
\end{proposition}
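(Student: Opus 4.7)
The plan is to invoke Lemma~\ref{lem:bottom-lfp}, which reduces the proposition to two computations: (i) identifying $\F_{\mathrm{QED}}(\bot_{\mathcal T})$ with $P_{\mathrm{QED}}$, where $\bot_{\mathcal T}$ is the bottom of the package lattice, and (ii) confirming that this element is itself a fixed point. Monotonicity of $\F_{\mathrm{QED}}$ on the complete lattice of linear subspaces of $\Sigma_{\le 4}$ is already in hand (it is the composition $\operatorname{Inv}\circ\operatorname{Loc}\circ\operatorname{Sym}$, whose order-behaviour is antitone--monotone--antitone and hence monotone overall), and the lattice has a least element, so the hypotheses of the lemma are satisfied.

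First I would take $\bot_{\mathcal T}=\{0\}$, the trivial subspace of $\Sigma_{\le 4}$. Since every $\operatorname{Inv}(G)$ is a linear subspace, it contains $0$, so $\{0\}\subseteq \operatorname{Inv}(G)$ for every $G\in\Gamma$. Unwinding the definition of $\operatorname{Sym}$ gives $\operatorname{Sym}(\bot_{\mathcal T}) = \max\{G\in\Gamma\} = \top_\Gamma = U(1)_{\mathrm{loc}}$. Applying the localisation map yields $\operatorname{Loc}(U(1)_{\mathrm{loc}}) = U(1)_{\mathrm{loc}}$ by the stipulated action of $\operatorname{Loc}$ on the top of $\Gamma$, and the outer $\operatorname{Inv}$ then delivers $\operatorname{Inv}(U(1)_{\mathrm{loc}}) = P_{\mathrm{QED}}$ by the definition of $P_{\mathrm{QED}}$. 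Hence $\F_{\mathrm{QED}}(\bot_{\mathcal T}) = P_{\mathrm{QED}}$.

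Next I would invoke the Fixed point property established in the paragraph immediately preceding the proposition, namely $\F_{\mathrm{QED}}(P_{\mathrm{QED}}) = P_{\mathrm{QED}}$, which holds because $\operatorname{Sym}(P_{\mathrm{QED}}) = U(1)_{\mathrm{loc}}$ (the enumeration of renormalisable invariants has exhausted exactly the $U(1)_{\mathrm{loc}}$-compatible classes) and $\operatorname{Loc}$ fixes this group. Combining with the previous step, $\F_{\mathrm{QED}}(\bot_{\mathcal T})$ is itself a fixed point, and Lemma~\ref{lem:bottom-lfp} then yields $\mu\F_{\mathrm{QED}} = \F_{\mathrm{QED}}(\bot_{\mathcal T}) = P_{\mathrm{QED}}$.

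The argument contains no serious obstacle; the only conceptual subtlety is the identification $\operatorname{Sym}(\bot_{\mathcal T}) = \top_\Gamma$, which expresses the natural idea that an empty theoretical commitment places no constraint on admissible symmetries and thus admits the maximum one available in $\Gamma$. The structural lesson is that the least-fixed-point status of $P_{\mathrm{QED}}$ hinges on exactly two ingredients: the existence of a greatest element $\top_\Gamma\in\Gamma$, and the $\operatorname{Loc}$-stability of that top. Both were built into the QED setup, which is why the bottom-to-top saturation performed by $\F_{\mathrm{QED}}$ terminates in a single step at $P_{\mathrm{QED}}$.
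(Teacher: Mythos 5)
Your proposal is correct and follows essentially the same route as the paper's own proof: compute $\F_{\mathrm{QED}}(\bot)=\operatorname{Inv}(\operatorname{Loc}(\operatorname{Sym}(\{0\})))=\operatorname{Inv}(U(1)_{\mathrm{loc}})=P_{\mathrm{QED}}$, verify this is a fixed point via $\operatorname{Sym}(P_{\mathrm{QED}})=U(1)_{\mathrm{loc}}$, and conclude by Lemma~\ref{lem:bottom-lfp}. Your added justification that $\{0\}\subseteq\operatorname{Inv}(G)$ for every $G$ because each $\operatorname{Inv}(G)$ is a linear subspace is a slightly more explicit version of the paper's remark that $\bot$ "imposes no nontrivial invariance constraints."
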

\begin{proof}
Since $\bot_\Sigma=\{0\}$ imposes no nontrivial invariance constraints, $\operatorname{Sym}(\bot_\Sigma)$ is the greatest element of $\Gamma$,
hence $\operatorname{Sym}(\bot_\Sigma)= \top_\Gamma = U(1)_{\mathrm{loc}}$. Using $\operatorname{Loc}(U(1)_{\mathrm{loc}})=U(1)_{\mathrm{loc}}$,
\[
\F_{\mathrm{QED}}(\bot_\Sigma)
=
\operatorname{Inv}(\operatorname{Loc}(\operatorname{Sym}(\bot_\Sigma)))
=
\operatorname{Inv}(U(1)_{\mathrm{loc}})
=
P_{\mathrm{QED}}.
\]
Moreover, within the chosen $\Gamma$ one has $\operatorname{Sym}(P_{\mathrm{QED}})=U(1)_{\mathrm{loc}}$, hence
\[
\F_{\mathrm{QED}}(P_{\mathrm{QED}})
=
\operatorname{Inv}(\operatorname{Loc}(\operatorname{Sym}(P_{\mathrm{QED}})))
=
\operatorname{Inv}(U(1)_{\mathrm{loc}})
=
P_{\mathrm{QED}},
\]
so $\F_{\mathrm{QED}}(\bot_\Sigma)$ is a fixed point. By Lemma~\ref{lem:bottom-lfp}, it is the least fixed point.
\end{proof}
The "matter seed" $P_0=\mathrm{span}\{[\bar\psi i\gamma^\mu\partial_\mu\psi]\}$ is not itself a gauge-invariant operator class, so it need not satisfy $P_0\subseteq P_{\mathrm{QED}}$. What the construction shows is rather that applying the completion operator $\F_{\mathrm{QED}}$ to $P_0$ produces the least fixed point in one step:
\[
\F_{\mathrm{QED}}(P_0)=\operatorname{Inv}(U(1)_{\mathrm{loc}})=P_{\mathrm{QED}}=\mu \F_{\mathrm{QED}}.
\]
Thus the example exhibits QED as the canonical least fixed point selected by the monotone operator $\F_{\mathrm{QED}}$ on the package lattice.

Thus, it is shown that QED, in this form, can be presented as a least fixed point schema under the composed metalanguage assumptions of symmetry, locality in the sense of gauge principle, and closedness under local invariants within the renormalisable operator domain.

\subsection{II. General Relativity}
The QED example showed how a local internal symmetry selects interaction terms in a renormalisable operator domain under the chosen admissibility operator. General Relativity provides an analogous illustration. Once one fixes "metric geometry" as the kinematical arena and imposes diffeomorphism invariance together with locality and a derivative-order restriction, the Einstein--Hilbert action can be presented as a minimal stable package.
\paragraph{Geometric background.}
Let $M$ be a smooth, connected, oriented $4$-manifold. The dynamical field is a Lorentzian metric $g_{\mu\nu}$ on $M$. Diffeomorphisms $\phi\in \operatorname{Diff}(M)$ act on $g$ by pullback:
\[
g \longmapsto \phi^\ast g.
\]
\paragraph{Candidate domain.}
Let $\Sigma_{\le 2}$ be the vector space of local scalar densities built from $g_{\mu\nu}$ and finitely many derivatives, restricted to at most two derivatives of the metric,\footnote{Restricting the equations of motion to second derivative order is required to avoid the appearance of negative-energy ghost modes, called Ostrogradsky instability. In four dimensions, Lovelock's theorem \cite{Lovelock1971} guarantees that the only diffeomorphism-invariant actions satisfying this stability criterion are the Einstein-Hilbert and cosmological terms.} modulo the standard equivalence relation $\sim$ identifying densities that differ by a total derivative,
\[
\mathcal L \sim \mathcal L + \nabla_\mu V^\mu.
\]
We write $[\mathcal L]\in \Sigma_{\le 2}$ for an equivalence class. The corresponding action functional is $S[g]=\int_M d^4x\, [\mathcal L]$ under appropriate assumptions so that boundary terms can be ignored.
\paragraph{Theory packages.}
A theory package is represented by a linear subspace $S\subseteq \Sigma_{\le 2}$ spanned by a chosen set of density-classes. Linearity is again treated as background structure, so closure under linear combinations does not need to be enforced by the admissibility operator.

\paragraph{Symmetry space.}
For the GR example, the maximal group acting on the metric field is the group of diffeomorphisms of $M$, $\operatorname{Diff}(M)$. Therefore, let $\Gamma$ be
\[
    \Gamma := \{\mathbf{1}, \operatorname{Diff}(M) \}.
\]
So the "largest" symmetry allowed by the chosen framework is diffeomorphism invariance.
\paragraph{Invariant map.}
Define
\[
\operatorname{Inv}(G)\;:=\;\{\, [\mathcal L]\in\Sigma_{\le 2}\ :\ [\mathcal L]\ \text{is invariant under}\ G\,\},
\]
where invariance means $\mathcal L(g)$ and $\mathcal L(\phi^\ast g)$ differ by a total derivative, and hence lie in the same equivalence class for all $\phi\in G$.

\paragraph{Symmetry map.}
For a package $S\subseteq\Sigma_{\le 2}$ define
\[
\operatorname{Sym}(S)\;:=\;\max\{\,G\in\Gamma:\ S\subseteq \operatorname{Inv}(G)\,\}.
\]
Since $\Gamma=\{\mathbf 1,\operatorname{Diff}(M)\}$ is a finite chain, this maximum exists.

With these definitions one has the standard antitone correspondence
\[
S\subseteq \operatorname{Inv}(G)\quad\Longleftrightarrow\quad G\subseteq \operatorname{Sym}(S),
\]
so $\operatorname{Inv}$ and $\operatorname{Sym}$ form a Galois connection between operator packages and symmetry groups.
\paragraph{Admissibility operator.}
Define the GR admissibility operator as
\[
\F_{\mathrm{GR}}(S)\;:=\;\operatorname{Inv}\!\big(\operatorname{Sym}(S)\big)\ \subseteq\ \Sigma_{\le 2}.
\]
This operator is monotone with respect to $\subseteq$ by general properties of the Galois correspondence.
\paragraph{Seed.}
The "seed" is the kinematical commitment to a metric field, together with no further dynamical density-terms:
\[
P_0 := \{0\}\ \subseteq\ \Sigma_{\le 2}.
\]
Because $P_0$ contains no nontrivial densities, every diffeomorphism vacuously preserves all elements of $P_0$, hence
\[
\operatorname{Sym}(P_0)=\operatorname{Diff}(M).
\]
Therefore
\[
\F_{\mathrm{GR}}(P_0)=\operatorname{Inv}(\operatorname{Diff}(M)),
\]
so computing the image reduces to classifying the diffeomorphism-invariant density-classes in $\Sigma_{\le 2}$.

A key geometric fact is that diffeomorphism-invariant local scalar densities built from the metric are constructed from curvature tensors and their covariant derivatives, rather than from raw coordinate derivatives $\partial g$. In particular, curvature starts at second derivative order, so there are no genuinely invariant first-derivative scalar densities built only from $g$ and $\partial g$.

Within the truncation $\Sigma_{\le 2}$ this yields:
\begin{enumerate}
\item \textbf{Order 0.}
The only scalar density available is the volume form:
\[
[\sqrt{-g}]\ \in\ \Sigma_{\le 2}.
\]
Thus the associated action is the cosmological term
\[
S_\Lambda[g] := \int_M d^4x\, \sqrt{-g}.
\]
\item \textbf{Order 1.}
Formally, $\partial g$ is not a tensor, so no coordinate-invariant scalar density can be formed from $g$ and $\partial g$ alone. Intuitively, one may always choose locally inertial coordinates at a point, so that $\partial_\lambda g_{\mu\nu}=0$ there. 
\item \textbf{Order 2.}
At second derivative order the curvature tensor appears. The only non-trivial scalar density, up to normalisation and equivalence of terms with extra total derivatives, is generated by the Ricci scalar $R$:
\[
[\sqrt{-g}\,R]\ \in\ \Sigma_{\le 2},
\qquad
S_{\mathrm{EH}}[g]:=\int_M d^4x\, \sqrt{-g}\, R.
\]
Any other curvature scalar quadratic in curvature (e.g.\ $R^2$, $R_{\mu\nu}R^{\mu\nu}$) lies beyond the $\le 2$-derivative truncation because it contains four derivatives of the metric in total.
\end{enumerate}
Therefore, within $\Sigma_{\le 2}$,
\[
\operatorname{Inv}(\operatorname{Diff}(M))
=
\mathrm{span}\Big\{\, [\sqrt{-g}],\ [\sqrt{-g}\,R]\,\Big\}.
\]
\paragraph{Total package.}
Define the GR package
\[
P_{\mathrm{GR}}
:=
\mathrm{span}\Big\{\, [\sqrt{-g}],\ [\sqrt{-g}\,R]\,\Big\}
\ \subseteq\ \Sigma_{\le 2}.
\]
Equivalently, at the level of actions one may parameterise elements of $P_{\mathrm{GR}}$ by coefficients, as
\[
S[g]
=
\int_M d^4x\, \sqrt{-g}\,\big(\alpha R - 2\beta\big),
\]
with $\alpha,\beta$ constants to be identified empirically with $1/(16\pi G)$ and $\Lambda/(8\pi G)$ after normalisation.
\paragraph{Fixed point property.}
By construction, $P_{\mathrm{GR}}$ is stable:
\[
\F_{\mathrm{GR}}(P_{\mathrm{GR}})=P_{\mathrm{GR}},
\]
because $\operatorname{Sym}(P_{\mathrm{GR}})=\operatorname{Diff}(M)$ and $\operatorname{Inv}(\operatorname{Diff}(M))\cap\Sigma_{\le 2}=P_{\mathrm{GR}}$.
\paragraph{Least fixed pointhood.}
Since $\F_{\mathrm{GR}}$ is monotone on a complete lattice of packages (by the Galois correspondence), Tarski's theorem guarantees a least fixed point. In this example we compute directly that $\F_{\mathrm{GR}}(P_0)=P_{\mathrm{GR}}$, and $P_{\mathrm{GR}}$ is a fixed point, hence it is the least fixed point above $P_0$.

Hence, relative to the metric kinematics, diffeomorphism invariance, locality, and the two-derivative truncation, the Einstein--Hilbert plus cosmological term is the minimal stable residue. Therefore, GR is captured as the minimal stable package within the chosen domain, $\Sigma_{\le 2}$.

\section{Conclusion}

We began with a simple thought that the laws of nature might be characterised as exactly those candidates that satisfy an admissibility condition whose evaluation itself depends on the totality of laws. Taken extensionally, this idea either becomes ill-typed (laws versus law-sets) or collapses into self-membership once levels are forced to coincide. The first part of the paper made this failure explicit: An extensional reconstruction cannot sustain the intended explanatory reading without either type collapse or trivial self-endorsement.

The repair is to keep types distinct and treat "theory packages" as elements of a structured space of packages. In that setting, self-subsumption is not a paradoxical set-theoretic construction but a fixed point equation 
\[
P=\F(P).
\]
Under monotonicity on a complete lattice, Tarski's theorem guarantees fixed points and supplies a canonical minimal choice, the least fixed point $P=\mu\F$. This does not remove circularity; it replaces it by a controlled closure condition, where $P$ is determined as the minimal package stable under an explicitly specified admissibility operator.

To connect the abstract schema to physics, we proposed reading admissibility as completion under constitutive constraints. The key structural lesson is that closure is the invariant content. A package counts as an admissible theory precisely when it is stable under the completion map. This aligns naturally with an invariance-style conception of objectivity. What is "the theory" is what survives (is closed under) the admissibility transformations one treats as constitutive. In the examples, we instantiated $\F$ via an invariance--symmetry correspondence, and showed how QED and GR is represented as least fixed points once one fixes a candidate domain and encodes symmetry, locality in the gauge sense, and a truncation criterion.

Several limitations are built into the present formulation. First, the framework does not tell us which candidate domain $\Sigma$ or which admissibility operator $\F$ is correct; those encode substantive physical and philosophical choices. Second, the examples rely on simplifying truncations (renormalisable operators in QED; a two-derivative expansion in GR), so they illustrate the fixed point architecture rather than capture the full content of the theories. Third, monotonicity is a strong hypothesis. Many attractive criteria (consistency checks, optimality constraints, some notions of "minimality") are not monotone, and fall outside the direct scope of the Tarski canonicity argument.

These constraints also suggest natural directions for future work:
\begin{enumerate}[label=(\roman*)]
\item Enrich the package lattices $\mathcal T$ to incorporate
quotients by physical equivalence (gauge, duality, field redefinitions) as part of the package structure.
\item Study non-monotone admissibility notions and alternative fixed point principles, clarifying when and how canonicity survives without monotonicity.
\item Extend the invariance-based construction to more realistic settings (effective field theory bases, anomaly constraints, renormalisation group closure, and bootstrap-style constraint packages), where the set of admissible completions may form a nontrivial lattice of fixed points.
\end{enumerate}

The main takeaway is structural: Whenever one characterises a "fundamental" package by a rule of membership that itself depends on that package, the resulting formulation is, in general, a fixed point problem. Making this explicit separates two questions that are often conflated. The logical question of how to represent self-subsumption without contradiction, and the substantive question of which admissibility transformations should count as constitutive of physical law.

\printbibliography

@book{Nozick2001,
  author    = {Nozick, Robert},
  title     = {Invariances: The Structure of the Objective World},
  publisher = {Harvard University Press},
  address   = {Cambridge, MA},
  year      = {2001},
  isbn = {978-0674012455}
}

@book{Nozick1981,
  author    = {Nozick, Robert},
  title     = {Philosophical Explanations},
  publisher = {Harvard University Press},
  address   = {Cambridge, MA},
  year      = {1981},
  isbn = {9780674664487}
}

@article{Tarski1955,
  author  = {Tarski, Alfred},
  title   = {A Lattice-Theoretical Fixpoint Theorem and Its Applications},
  journal = {Pacific Journal of Mathematics},
  volume  = {5},
  issue = {2},
  year    = {1955},
  pages   = {285--309},
  doi     = {10.2140/pjm.1955.5.285}
}

@article{Kripke1975,
  author  = {Kripke, Saul A.},
  title   = {Outline of a Theory of Truth},
  journal = {The Journal of Philosophy},
  volume  = {72},
  issue  = {19},
  year    = {1975},
  pages   = {690--716},
  doi     = {10.2307/2024634}
}

@article{deharo2025geometricviewtheories,
  author          = {De Haro, Sebastian},
  title           = {The geometric view of theories},
  journaltitle    = {Synthese},
  year            = {2026},
  volume          = {207},
  number          = {37},
  doi             = {10.1007/s11229-025-05402-y},
  url             = {https://doi.org/10.1007/s11229-025-05402-y}
}

@book{deharo2025duality,
  author    = {De Haro, Sebastian and Butterfield, Jeremy},
  title     = {The Philosophy and Physics of Duality},
  publisher = {Oxford University Press},
  year      = {2025},
  doi = {10.1093/oso/9780198846338.001.0001}
}

@article{Russell1908,
	author = {Bertrand Russell},
	doi = {10.2307/2272708},
	journal = {American Journal of Mathematics},
	number = {3},
	pages = {222--262},
	publisher = {Association for Symbolic Logic},
	title = {Mathematical Logic as Based on the Theory of Types},
	volume = {30},
	year = {1908}
}

@article{Bliss2013,
  author    = {Bliss, R. L.},
  title     = {Viciousness and the structure of reality},
  journal   = {Philosophical Studies},
  year      = {2013},
  volume    = {166},
  issue    = {2},
  pages     = {399--418},
  doi       = {10.1007/s11098-012-0043-0},
  publisher = {Springer}
}

@book{Aczel1988,
  author    = {Aczel, Peter},
  title     = {Non-Well-Founded Sets},
  publisher = {CSLI Publications},
  address   = {Stanford, CA},
  year      = {1988},
  isbn = {0-937073-22-9}
}

@book{davey2002,
  author    = {Brian A. Davey and Hilary A. Priestley},
  title     = {Introduction to Lattices and Order},
  publisher = {Cambridge University Press},
  year      = {2002},
  edition   = {Second},
  series    = {Cambridge Mathematical Textbooks},
  address   = {Cambridge, UK},
  isbn      = {9780521784511},
  doi       = {10.1017/CBO9780511809088},
  note      = {See Chapter 7 for Galois Connections and Fixed Points}
}

@book{PeskinSchroeder2015,
  author    = {Peskin, Michael E. and Schroeder, Daniel V.},
  title     = {An Introduction to Quantum Field Theory},
  year      = {2015},
  publisher = {Westview Press},
  isbn      = {978-0-8133-5019-6},
  address   = {Boulder, CO}
}

@book{Weinberg1995,
  title     = {The Quantum Theory of Fields, Volume 1: Foundations},
  author    = {Weinberg, Steven},
  year      = {1995},
  publisher = {Cambridge University Press},
  address   = {Cambridge},
  isbn      = {978-0521550017}
}

@book{MisnerThorneWheeler1973,
  title     = {Gravitation},
  author    = {Misner, Charles W. and Thorne, Kip S. and Wheeler, John Archibald},
  year      = {1973},
  publisher = {W. H. Freeman},
  address   = {San Francisco},
  isbn      = {978-0716703440}
}

@book{Carroll2019,
  title     = {Spacetime and Geometry: An Introduction to General Relativity},
  author    = {Carroll, Sean M.},
  year      = {2019},
  publisher = {Cambridge University Press},
  address   = {Cambridge},
  isbn      = {978-1108488396}
}

@article{Lovelock1971,
  author  = {Lovelock, David},
  title   = {The Einstein Tensor and Its Generalizations},
  journal = {Journal of Mathematical Physics},
  volume  = {12},
  issue  = {3},
  pages   = {498--501},
  year    = {1971},
  doi     = {10.1063/1.1665613}
}

@online{Carroll2020,
  author       = {Carroll, John W.},
  title        = {Laws of Nature},
  booktitle    = {The {Stanford} Encyclopedia of Philosophy},
  editor       = {Zalta, Edward N.},
  year         = {2020},
  note         = {First published 2003; substantive revision 2020-11-16},
  url          = {https://plato.stanford.edu/entries/laws-of-nature/},
  urldate      = {2026-02-03}
}

@book{Lewis1973,
  author       = {Lewis, David K.},
  title        = {Counterfactuals},
  publisher    = {Harvard University Press},
  year         = {1973},
  isbn         = {9780674175402}
}

@article{Lewis1983,
  author       = {Lewis, David K.},
  title        = {New Work for a Theory of Universals},
  journal      = {Australasian Journal of Philosophy},
  year         = {1983},
  volume       = {61},
  issue       = {4},
  pages        = {343--377},
  doi          = {10.1080/00048408312341131}
}

@article{Dretske1977,
  author       = {Dretske, Fred I.},
  title        = {Laws of Nature},
  journal      = {Philosophy of Science},
  year         = {1977},
  volume       = {44},
  issue       = {2},
  pages        = {248--268},
  doi          = {10.1086/288741}
}

@article{Tooley1977,
  author       = {Tooley, Michael},
  title        = {The Nature of Laws},
  journal      = {Canadian Journal of Philosophy},
  year         = {1977},
  volume       = {7},
  issue       = {4},
  pages        = {667--698},
  doi          = {10.1080/00455091.1977.10716190}
}

@book{Armstrong1983,
  author       = {Armstrong, D. M.},
  title        = {What is a Law of Nature?},
  publisher    = {Cambridge University Press},
  year         = {1983},
  isbn         = {9780521314817}
}

@book{Cartwright1983,
  author       = {Cartwright, Nancy},
  title        = {How the Laws of Physics Lie},
  publisher    = {Oxford University Press},
  year         = {1983},
  isbn         = {9780198247043}
}

@book{Lange2009,
  author       = {Lange, Marc},
  title        = {Laws and Lawmakers: Science, Metaphysics, and the Laws of Nature},
  publisher    = {Oxford University Press},
  year         = {2009},
  isbn         = {9780195328134}
}

@article{Earman2004,
  author       = {Earman, John},
  title        = {Laws, Symmetry, and Symmetry Breaking: Invariance, Conservation Principles, and Objectivity},
  journal      = {Philosophy of Science},
  year         = {2004},
  volume       = {71},
  issue       = {5},
  pages        = {1227--1241},
  doi          = {10.1086/428016}
}

@book{BradingCastellani2003,
  editor       = {Brading, Katherine and Castellani, Elena},
  title        = {Symmetries in Physics: Philosophical Reflections},
  publisher    = {Cambridge University Press},
  year         = {2003},
  isbn         = {9780521821377}
}

@inproceedings{CousotCousot1977,
  author       = {Cousot, Patrick and Cousot, Radhia},
  title        = {Abstract Interpretation: A Unified Lattice Model for Static Analysis of Programs by Construction or Approximation of Fixpoints},
  booktitle    = {Proceedings of the 4th {ACM} {SIGACT}-{SIGPLAN} Symposium on Principles of Programming Languages ({POPL} '77)},
  year         = {1977},
  pages        = {238--252},
  doi          = {10.1145/512950.512973},
  publisher    = {Association for Computing Machinery}
}

@incollection{Tarski1956,
  author       = {Tarski, Alfred},
  title        = {The Concept of Truth in Formalized Languages},
  booktitle    = {Logic, Semantics, Metamathematics: Papers from 1923 to 1938},
  editor       = {Tarski, Alfred},
  publisher    = {Clarendon Press},
  address      = {Oxford},
  year         = {1956},
  pages        = {152--278}
}
\end{document}